\providecommand{\algorithmname}{Algorithm}
\theoremstyle{plain}
\definecolor{ForestGreen}{rgb}{0.1333,0.5451,0.1333}
\definecolor{DarkRed}{rgb}{0.8,0,0}
\definecolor{Red}{rgb}{1,0,0}
\newtheorem{theorem}{Theorem}[section]
\newtheorem{thm}[theorem]{Theorem}
\newtheorem{lem}[theorem]{Lemma}
\newtheorem{cor}[theorem]{Corollary}
\newtheorem{defn}[theorem]{Definition}
\begin{document}
\global\long\def\vol{\mathrm{vol}}
\global\long\def\X{\mathcal{X}}
\global\long\def\Ohat{\widehat{O}}
\global\long\def\trim{\textsc{trim}}
\global\long\def\shave{\textsc{shave}}
\global\long\def\expander{\textsc{expander}}
\global\long\def\poly{\mathrm{poly}}
\global\long\def\polylog{\mathrm{polylog}}

\title{A Simple Deterministic Algorithm for Edge Connectivity}

\author{Thatchaphol Saranurak}
\maketitle
\begin{abstract}
We show a deterministic algorithm for computing edge connectivity
of a simple graph with $m$ edges in $m^{1+o(1)}$ time. Although
the fastest deterministic algorithm by Henzinger, Rao, and Wang {[}SODA'17{]}
has a faster running time of $O(m\log^{2}m\log\log m)$, we believe
that our algorithm is conceptually simpler. The key tool for this
simplication is the \emph{expander decomposition}. We exploit it in
a very straightforward way compared to how it has been previously
used in the literature.
\end{abstract}

\section{Introduction}

Edge connectivity is a fundamental measure for robustness of graphs.
Given an undirected graph $G=(V,E)$ with $n$ vertices and $m$ edges,
the edge connectivity $\lambda$ of $G$ is the minimum number of
edges whose deletion from $G$ disconnects $G$. These edges correspond
to a \emph{(global) minimum cut} $(C,V\setminus C)$ where the number
of edges crossing the cut is $|E(C,V\setminus C)|=\lambda$. Numerous
algorithms for computing edge connectivity have been discovered and
are based on various fascinating techniques, including exact max flow
computation \cite{ford1962flows,HaoO94,LiP_submodular}, maximum adjacency
ordering \cite{NagamochiI92_mincut,StoerW97,frank2009edge}, random
contraction \cite{Karger93,KargerS96}, arborescence packing \cite{Gabow95,GabowM98},
and greedy tree packing and minimum cuts that 2-respect a tree \cite{Karger00,BhardwajLS20,GawrychowskiMW20,MukhopadhyayN20,GawrychowskiMW20_note}.
All of these techniques also extend to weighted graphs where we need to find a cut with minimum total edge weight crossing the
cut. 

Quite recently, Kawarabayashi and Thorup \cite{KawarabayashiT19}
showed a novel technique for computing edge connectivity of \emph{simple}
unweighted graphs (i.e.~graphs with no parallel edges) in $O(m\log^{12}n)$
time deterministically. This technique leads to the fastest deterministic
algorithm with $O(m\log^{2}n\log\log n)$ time by Henzinger, Rao,
and Wang \cite{HenzingerRW17}, and the fastest randomized algorithm
with running time $\min\{O(m\log n),O(m+n\log^{3}n)\}$ with high
probability by Ghaffari, Nowicki, and Thorup \cite{Ghaffari0T20}.
The state-of-the-art algorithms for non-simple graphs have slower
running times. 

The core idea in this line of work is a new contraction technique
that preserves all \emph{non-trivial} minimum cuts. Recall that \emph{trivial}
cuts $(C,V\setminus C)$ are cuts where $\min\{|C|,|V\setminus C|\}=1$.
Although the algorithm by \cite{Ghaffari0T20} already gave a simple
implementation of this idea using randomization, all deterministic
algorithms for finding such a contraction are still quite complicated.
For example, they require intricate analysis of personalized PageRank
\cite{KawarabayashiT19} and local flow technique \cite{HenzingerRW17}
and a non-trivial way for combining all algorithmic tools together. 

In this paper, we observe that such a contraction follows almost immediately
from the \emph{expander decomposition} introduced in \cite{KannanVV04}.
Although the best-known implementation of expander decomposition itself
is not yet very simple \cite{SaranurakW19,ChuzhoyGLNPS20}, given
it as a black-box, our algorithm can be described in only few steps
and we believe that it offers a \emph{conceptual} simplification of
this contraction technique. Our result is as follows:
\begin{thm}
\label{thm:main}There is a deterministic algorithm that, given a
simple graph with $m$ edges, computes its edge connectivity in $m^{1+o(1)}$
time.\footnote{It is easy to extend the algorithm to compute the corresponding minimum
cut but we omit it here. }
\end{thm}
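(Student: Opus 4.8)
The plan is to separate the computation into trivial and non-trivial minimum cuts. Trivial cuts are easy: the minimum degree $\delta = \min_{v} \deg(v)$ is computable in $O(m)$ time and equals the best trivial cut. So I would output $\min\{\delta, \lambda^{\mathrm{nontriv}}\}$, where $\lambda^{\mathrm{nontriv}}$ is the smallest non-trivial cut; the whole difficulty is in computing (or suitably upper-bounding) $\lambda^{\mathrm{nontriv}}$. Note that if $\lambda = \delta$ we are already done, so we may assume $\lambda < \delta$, meaning every minimum cut is non-trivial. A non-trivial cut $(C, V\setminus C)$ with $\min\{|C|,|V\setminus C|\}\ge 2$ in a simple graph has both sides of size $\ge 2$, and one checks that a side $C$ with few crossing edges but $|C|\ge 2$ must be "well-connected internally" relative to its boundary — this is exactly the intuition expander decomposition exploits.

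The key step is to invoke expander decomposition: in $m^{1+o(1)}$ time, partition $V$ into clusters $V_1,\dots,V_k$ such that (i) each $G[V_i]$ is a $\phi$-expander for $\phi = 1/n^{o(1)}$ (equivalently $1/\operatorname{polylog}$, depending on the black-box used), and (ii) the total number of inter-cluster edges is at most $\phi m \cdot n^{o(1)}$, i.e.\ a small fraction of $m$. The crucial claim I would prove is that if $\lambda < \delta$ and $\phi$ is chosen small enough, then every minimum cut $(C, V\setminus C)$ is "respected" by the decomposition in the sense that each cluster $V_i$ lies almost entirely on one side of the cut — more precisely, $\min\{|V_i \cap C|, |V_i \cap (V\setminus C)|\}$ is small (a single vertex, or at most $O(\lambda/(\phi\delta))$ vertices). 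This follows because an expander cannot be split in a balanced way by few edges: if $V_i$ were split with both parts of volume comparable, the cut within $G[V_i]$ would need $\ge \phi \cdot \min(\text{vol})$ edges, exceeding $\lambda$. Hence the minimum cut is captured, up to moving a tiny number of vertices per cluster, by a cut that is "aligned" with the clustering.

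Given this structural fact, the algorithm contracts each cluster (or each cluster minus its few "boundary" vertices) and is left with a graph on $n^{o(1)}$-fraction-of-$m$ edges — small enough to run any near-linear (even super-linear but polynomial) exact min-cut subroutine, or to recurse. One must be careful that contraction never destroys a non-trivial min cut: since each min cut is aligned with the clusters up to few vertices, we should iterate over the small number of candidate "stray" vertices per cluster, or more cleanly, use the standard trick of contracting only vertices that are provably on the same side (e.g.\ by a local-cut / trimming argument ensuring the contracted set has large internal connectivity $\ge \delta$). Repeating the decompose-and-contract step $O(\log n)$ times — or once with a strong enough expansion parameter — drives the edge count down to $O(n)$ or below, at which point a single max-flow-based exact computation finishes in $n^{1+o(1)}$ time, and the overall running time telescopes to $m^{1+o(1)}$.

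The main obstacle I anticipate is the structural claim that an expander cluster is essentially uncut by any global minimum cut, together with correctly handling the few vertices per cluster that may cross: one has to argue that the number of such stray vertices is small enough that the contracted graph is genuinely sparse, and that no non-trivial min cut is accidentally merged away. This is exactly where the assumption $\lambda < \delta$ (so min cuts are non-trivial, hence each side has volume $\ge \delta \ge \lambda/\phi$ after choosing $\phi$ appropriately) is used, and getting the quantifiers on $\phi$, $\lambda$, $\delta$, and the stray-vertex count to line up is the delicate part of the argument.
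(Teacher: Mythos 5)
Your overall plan---dispose of trivial cuts via the minimum degree $\delta$, use expander decomposition so that every non-trivial minimum cut nearly respects the clusters, clean up each cluster, contract, and finish on the smaller graph---is the same route as the paper, but two of your concrete choices have genuine gaps. The first is the choice of $\phi$ together with the finishing step. With $\phi=1/n^{o(1)}$ (or $1/\polylog n$) the number of inter-cluster edges is only guaranteed to be about $\phi m\cdot m^{o(1)}$, which is not small relative to $\delta$, so the contracted graph is not sparse enough for any known deterministic near-linear finish; and your two proposed finishes both fail: a \emph{single} max-flow computation gives an $s$--$t$ min cut, not a global one, and ``repeat decompose-and-contract $O(\log n)$ times'' breaks because after one contraction the graph is a multigraph, while the entire structural argument (volume $\ge\delta|S|$, at most $|S|^2$ internal edges among $|S|$ vertices, the trimming inequalities) relies on simplicity of $G$; the Kawarabayashi--Thorup framework does not recurse on the contracted graph. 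The paper's resolution is to take $\phi=40/\delta$: then the inter-cluster (and hence, after trimming/shaving, the contracted-graph) edge count is $O(m\gamma/\delta)$, and one finishes in a single shot with Gabow's algorithm capped at $k=\delta$, costing $O(|E(G')|\cdot\delta)=O(m\gamma)$---no recursion and no max flow.

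The second gap is how the ``stray'' vertices are removed. You cannot ``iterate over the small number of candidate stray vertices per cluster,'' because you do not know the minimum cut and hence cannot identify them; what is needed is a cut-oblivious cleaning rule with a proof that afterwards the cluster lies entirely on one side of \emph{every} non-trivial minimum cut. The paper's rule is the two-step $\trim$/$\shave$: expansion with $\phi=40/\delta$ gives $\min\{|X\cap C|,|X\setminus C|\}\le\lambda/40$; after $\trim$ (every surviving vertex keeps internal degree $\ge 2\deg(v)/5$), simplicity forces the smaller overlap down to at most $2$ vertices, since $t$ vertices can absorb at most $2t^{2}$ internal edge endpoints while they must supply $\frac{2}{5}\delta t$; and after $\shave$ (every surviving vertex has internal degree $>\deg(v)/2+1$), the overlap must be $0$, because a remaining vertex of $C$ would have more than half its degree on the $V\setminus C$ side and could be moved across, contradicting minimality (this is where non-triviality of the cut is used). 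Your sketch has only a vague version of the first step (``internal connectivity $\ge\delta$,'' which is neither clearly computable in near-linear time nor sufficient) and omits the second step entirely, yet it is exactly this trim-then-shave argument, with the quantitative choice $\phi=\Theta(1/\delta)$, that makes the contraction provably safe and the contracted graph provably sparse.
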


Expander decomposition is one of the most versatile tools in the
area of graph algorithms. Its existence was first exploited for graph
property testing \cite{GoldreichR02} and then for approximation algorithms
\cite{Trevisan05,ChekuriKS05,ChekuriKS13}. Fast algorithms of expander
decomposition \cite{SpielmanT13,OrecchiaV11,OrecchiaSV12,SaranurakW19,ChuzhoyGLNPS20}
are at the core of almost-linear time algorithms for many fundamental
problems including (directed) Laplacian solvers \cite{SpielmanT14,CohenKPPRSV17},
max flows \cite{KelnerLOS14}, matching \cite{BrandLNPSSSW20}, and various types of graph sparsifiers
\cite{SpielmanT11,ChuGPSSW18,Cheng2020sparsification,ChalermsookDLKPPSV20}
and sketchings \cite{AndoniCKQWZ16,JambulapatiS18}. More recently,
it has been used to break many long-standing barriers in the areas of dynamic
algorithms \cite{NanongkaiS17,Wulff-Nilsen17,NanongkaiSW17,ChuzhoyK19,ChuzhoyGLNPS20,BernsteinBGNSSS20,BernsteinGS20scc,GoranciRST20,JinS20,ChuzhoyS20}
and distributed algorithms \cite{EdenR18,ChangPZ19,DagaHNS19,ChangS19,ChangS20_det,Censor-HillelGL20}.

Unfortunately, how the expander decomposition has been applied is
usually highly non-trivial; it is either a step in a much bigger algorithm
containing other complicated components, or the guarantee of the decomposition
is exploited via involved analysis.

Both our algorithm and analysis are straightforward. The only key
step of the algorithm simply applies the expander decomposition followed
by the simple trimming and shaving procedures defined in
\cite{KawarabayashiT19}. We note that the idea of using expander
decomposition for edge connectivity actually appeared previously in
the distributed algorithm by \cite{DagaHNS19}.
However, that work requires many other distributed algorithmic components
and inevitably played down the simplicity of this approach. In fact, since of the original work by \cite{KawarabayashiT19}, their discussion in Sections 1.4 and 1.5 strongly suggested that expander decomposition should be useful. 
We hope
that this paper can highlight this simple idea and serve as a gentle introduction
on how to apply expander decomposition in general.


The $m^{o(1)}$ factor in \Cref{thm:main} solely depends on quality
and efficiency of expander decomposition algorithms. It is believable
that this factor can be improved to $\polylog(n)$, which would immediately
improve the running time of our algorithm to $O(m\polylog(n))$.

\section{Preliminaries}

For any graph $G=(V,E)$ and a vertex set $S$, the \emph{volume}
of $S$ is denoted by $\vol_{G}(S)=\sum_{v\in S}\deg(v)$. For any
$A,B\subseteq V$, let $E(A,B)$ denote the set of edges with one
endpoint in $A$ and another in $B$. 
Let $\delta$ denote the minimum vertex degree of $G$.
Now, we state the key tool,
the expander decomposition. 
\begin{lem}
[Corollary 7.7 of \cite{ChuzhoyGLNPS20}]\label{lem:decomp}There is
an algorithm denoted by $\expander(G,\phi)$ that, given an $m$-edge
graph $G=(V,E)$ and a parameter $\phi \ge 0$, in $O(m\gamma)$
time where $\gamma=m^{o(1)}$, returns a partition $\X=\{X_{1},\dots,X_{k}\}$
of $V$ such that
\begin{itemize}
\item $\sum_{i}|E(X_{i},V\setminus X_{i})|=O(\phi m\gamma)$, and 
\item For each $i$ and each $\emptyset\neq S\subset X_{i}$, $|E(S,X_{i}\setminus S)|\ge\phi\min\{\vol_{G}(S),\vol_{G}(X_{i}\setminus S)\}$.\footnote{In \cite{ChuzhoyGLNPS20}, this guarantee is stated in a slightly
weaker form. This can be strengthen w.l.o.g.~(see
\Cref{sec:strong decomp}).}
\end{itemize}
\end{lem}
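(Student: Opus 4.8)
The plan is to invoke Corollary~7.7 of \cite{ChuzhoyGLNPS20} essentially as a black box and then close the small gap between what it gives and what \Cref{lem:decomp} states. The discrepancy referred to in the footnote is that \cite{ChuzhoyGLNPS20} certifies the expansion of each piece only against the volume \emph{internal} to that piece, i.e.\ it produces a partition with total boundary $O(\phi m\gamma)$ in which each piece satisfies $|E(S,X_i\setminus S)|\ge\phi\min\{\vol_{G[X_i]}(S),\vol_{G[X_i]}(X_i\setminus S)\}$, whereas here we want the same inequality with $\vol_G$ in place of $\vol_{G[X_i]}$. These two forms are genuinely different: a vertex $v$ sending almost all of its edges out of its piece has $\vol_G(\{v\})\gg\vol_{G[X_i]}(\{v\})$ and would violate the $\vol_G$-version even when the piece is internally an expander, so the upgrade must physically evict such ``leaky'' vertices from their pieces.

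Concretely, I would first run the algorithm of \cite{ChuzhoyGLNPS20} with parameter $\Theta(\phi)$ to obtain a partition with total boundary $O(\phi m\gamma)$ whose pieces are $\Theta(\phi)$-expanders in the internal-volume sense, and then clean up each piece $X$ using a trimming step in the spirit of $\trim$ and $\shave$ from \cite{KawarabayashiT19}: repeatedly delete from $X$ any vertex $v$ whose surviving internal degree drops below $\tfrac12\deg_G(v)$, and recursively run the whole construction on the graph induced by the deleted vertices so that the output is still a partition of $V$. The reason the threshold $\tfrac12$ is the right one is an elementary observation: once every $v\in V(H)$ has $\deg_H(v)\ge\tfrac12\deg_G(v)$ we get $\vol_H(U)\ge\tfrac12\vol_G(U)$ for all $U\subseteq V(H)$, so for any $S$ with $\vol_G(S)\le\vol_G(V(H)\setminus S)$ the quantity $\min\{\vol_H(S),\vol_H(V(H)\setminus S)\}$ is at least $\tfrac12\vol_G(S)$; hence a $\psi$-expander in the internal-volume sense is automatically a $(\psi/2)$-expander in the $\vol_G$ sense. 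Applying this with $\psi=\Theta(\phi)$ to each surviving piece yields the second bullet of \Cref{lem:decomp}.

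The step I expect to be the real work is the boundary bound, since evicting vertices creates new cut edges and recursing on them restarts the $O(\phi m\gamma)$-boundary machinery, so one must argue these contributions do not accumulate beyond $O(\phi m\gamma)$ in total. The plan here is to use the quantitative guarantee that $\trim$ proves in \cite{KawarabayashiT19} — that the volume evicted from a piece is controlled by that piece's boundary and expansion — to conclude both that a piece far from boundary-linked loses a constant fraction of its volume (so the recursion terminates after $O(\log m)$ levels and the per-level boundary contributions $O(\phi m\gamma/2^{\ell})$ telescope) and that the cut edges joining an evicted set to its surviving piece are sparse enough to charge away. An alternative route that may sidestep the loss entirely is to run the decomposition on a bounded-degree ``split graph'' of $G$ and transfer the result back, along the lines of \cite{KawarabayashiT19}. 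Either way, the residual bookkeeping — absorbing the $O(\log m)$ recursion factor and any $m^{o(1)}$ slack in \cite{ChuzhoyGLNPS20} into $\gamma=m^{o(1)}$, and dispatching the degenerate regime $\phi=\Omega(1/\gamma)$ by simply returning the all-singletons partition (total boundary $2m=O(\phi m\gamma)$, second bullet vacuous) — is precisely the ``w.l.o.g.'' the footnote alludes to.
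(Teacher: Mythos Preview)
Your primary plan has a genuine gap at the step ``Applying this with $\psi=\Theta(\phi)$ to each surviving piece''. The degree-threshold trimming you describe guarantees that every surviving vertex $v\in X'$ has $\deg_{G[X']}(v)\ge\tfrac12\deg_G(v)$, but it does \emph{not} guarantee that $G[X']$ remains a $\Theta(\phi)$-expander in the internal-volume sense, which is exactly the hypothesis your ``elementary observation'' needs. The edges in $E(S,X\setminus X')$ for $S\subseteq X'$ were part of the certificate that $|E_{G[X]}(S,X\setminus S)|$ is large; once $X\setminus X'$ is evicted, $|E_{G[X']}(S,X'\setminus S)|$ can be arbitrarily small relative to $\vol_{G[X']}(S)$. (Picture $X=A\cup T\cup B$ where the hub $T$ carries essentially all $A$--$B$ connectivity and the vertices of $T$ happen to have large external degree, so $T$ is trimmed away.) The $\trim$ and $\shave$ procedures of \cite{KawarabayashiT19} make no claim to preserve expansion; that is a separate and harder statement (expander pruning, as in \cite{SaranurakW19}), and even importing it would not directly yield the degree condition you rely on, so you would be forced to alternate pruning and degree-trimming with its own termination and charging argument layered on top of your recursion.

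By contrast, the paper's argument in \Cref{sec:strong decomp} is a one-line reduction with no post-processing: form $G'$ from $G$ by attaching $\deg_G(v)$ self-loops at each vertex $v$, and run the weak decomposition of \cite{ChuzhoyGLNPS20} on $G'$. Self-loops never cross a cut, so $E_{G'}(A,B)=E_G(A,B)$ for disjoint $A,B$ and the boundary bound transfers (using $m'=O(m)$). At the same time the self-loops force $\vol_{G'[X_i]}(S)\ge\vol_G(S)$ for every $S\subseteq X_i$, so the weak expansion inequality on $G'$ \emph{is already} the strong inequality on $G$. No vertex ever moves between pieces, hence there is no recursion, no new boundary to charge, and no risk of destroying expansion.
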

Note that if $\phi \ge 1/\gamma$, then the trivial partition $\X = \{v\mid v\in V\}$ satisfies the above guarantees.

The next tool is a deterministic algorithm by Gabow for computing
edge connectivity. Gabow's algorithm, in fact, can return the corresponding
minimum cut and also works for directed graphs, but we don't need
these guarantees in this paper.
\begin{lem}
[\cite{Gabow95}]\label{lem:Gabow}There is an algorithm that, given
an $m$-edge graph $G=(V,E)$ and a parameter $k$, in time $O(m\cdot\min\{\lambda,k\})$
returns $\min\{\lambda,k\}$ where $\lambda$ is the edge connectivity
of $G$.
\end{lem}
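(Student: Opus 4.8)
The plan is to reduce undirected edge connectivity to packing edge-disjoint spanning arborescences in an associated directed graph, and then to compute the (capped) packing number by an incremental augmenting-path procedure that adds one arborescence per round.

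First I would bidirect $G$: replace each undirected edge $\{u,v\}$ by the two arcs $(u,v)$ and $(v,u)$, obtaining a digraph $D$ with $2m=O(m)$ arcs, and fix an arbitrary root $r\in V$. For every set $X$ with $\emptyset\ne X\ne V$, the number of arcs of $D$ entering $X$ equals $|E_{G}(X,V\setminus X)|$; since every cut of $G$ has a side avoiding $r$, this gives $\min_{X}d^{-}_{D}(X)=\lambda$, where $X$ ranges over nonempty subsets of $V\setminus\{r\}$. By Edmonds' disjoint-branchings theorem, this minimum is exactly the maximum number $p$ of pairwise edge-disjoint spanning $r$-arborescences in $D$. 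Hence $\min\{\lambda,k\}=\min\{p,k\}$, and it suffices to determine the packing number capped at $k$.

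I would then build the packing greedily. Maintaining edge-disjoint spanning $r$-arborescences $T_{1},\dots,T_{j}$, round $j+1$ attempts to create $T_{j+1}$ from the unused arcs, while allowing rearrangements of $T_{1},\dots,T_{j}$ (described below). If round $j+1$ succeeds, increment $j$ and continue, unless $j+1>k$, in which case halt and return $k$ (correct, since then $\lambda\ge k$). If round $j+1$ fails --- meaning the packing cannot be extended --- then by Edmonds' theorem $p=j$, so $\lambda=j$: the bound $\lambda\ge j$ is certified by the $j$ arborescences already built, and equality by the set $X$ with $d^{-}_{D}(X)=j$ that the failed search exposes; so halt and return $j$. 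In all cases the algorithm stops after $O(\min\{\lambda,k\})$ rounds.

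The substance of the lemma, and the main obstacle, is implementing one round in $O(m)$ time so that it either augments the packing or detects that it cannot. For this I would grow $T_{j+1}$ from $r$ by a layered, Dinic-style search, maintaining the reachable set $R$; whenever $R$ has no outgoing unused arc, I would search for a short \emph{exchange}: an arc currently used by some $T_{i}$ and entering $R$, together with an unused replacement arc that keeps $T_{i}$ a spanning $r$-arborescence after the first arc is released to $T_{j+1}$, possibly chained through several of the $T_{i}$. This is an augmenting-path computation in an auxiliary exchange structure dictated by the matroid of arborescences, and the key combinatorial fact to establish is the dichotomy: either an augmenting exchange exists, or $V\setminus R$ is a set certifying $p=j$. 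The $O(m)$ per-round bound then comes from scanning arcs under a monotone layering so that each arc is touched $O(1)$ times per round, and maintaining each $T_{i}$ under edge swaps with a dynamic-tree (link-cut) structure; Gabow's analysis shows each round is near-linear (his original statement carries an extra logarithmic factor that is irrelevant to the present application). Multiplying $O(m)$ per round by the $O(\min\{\lambda,k\})$ rounds yields the stated time bound, and by the two stopping rules the returned value equals $\min\{p,k\}=\min\{\lambda,k\}$. The points needing the most care are the correctness of the exchange-search dichotomy and the amortized $O(m)$ analysis of a single round.
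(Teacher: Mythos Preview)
The paper does not prove this lemma at all: it is stated purely as a citation to Gabow's result and used as a black box, so there is nothing to compare against on the paper's side.

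Your sketch is a reasonable high-level outline of the cited algorithm: the reduction to arborescence packing via Edmonds' branching theorem and the round-by-round augmentation capped at $k$ are exactly the right framework, and your two stopping rules correctly yield $\min\{\lambda,k\}$. Where your write-up is thin is precisely where the actual work in Gabow's paper lies---the ``exchange-search dichotomy'' and the amortized near-linear bound per round---and you rightly flag these as the delicate points rather than claiming to have resolved them. As a proof \emph{proposal} this is fine; as a self-contained proof it would not be, since those two ingredients are the substance of the cited result. For the purposes of the present paper, simply citing Gabow (as the paper does) is the appropriate level of detail.
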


Lastly, we describe the $\trim$ and $\shave$ procedures
from \cite{KawarabayashiT19}. 
\begin{defn}
For any vertex set $S$ of a graph $G=(V,E)$, let $\trim(S)\subseteq S$
be obtained from $S$ as follows: while there exists a vertex $v\in S$ where
$|E(v,S)|<2\deg(v)/5$, removes $v$ from $S$. Let $\shave(S)=\{v\in S\mid|E(v,S)|>\deg(v)/2+1\}$.
\end{defn}
Note that, for every $v\in\trim(S)$,
$|E(v,\trim(S))|\ge2\deg(v)/5$.
Intuitively, the main difference between the two procedures is that $\trim$ keeps removing a vertex with low ``inside-degree'' as long as it exists, while $\shave$ removes all low ``inside-degree'' vertices once. 

\section{Algorithm and Analysis}

\begin{algorithm}
\begin{enumerate}
\item \label{step:main}Compute $\X=\expander(G,40/\delta)$, $\X'=\{\trim(X)\mid X\in\X\}$,
 $\X''=\{\shave(X')\mid X'\in\X'\}$. 
\item \label{step:contract}Let $G'$ be the graph obtained from $G$ by
contracting every set $X''\in\X''$. 
\item \label{step:Gabow contracted}Using Gabow's algorithm (\Cref{lem:Gabow}),
\textbf{return }$\min\{\lambda',\delta\}$ where $\lambda'$ is the edge connectivity of $G'$ and $\delta$ is the minimum vertex degree of $G$.
\end{enumerate}
\caption{\label{alg:edgeconn}Computing edge connectivity $\lambda$ of a simple
graph $G$}
\end{algorithm}

Our algorithm
is summarized in \Cref{alg:edgeconn}. 
Step~\ref{step:main} is \emph{the} step that simplifies the previous algorithms
by \cite{KawarabayashiT19,HenzingerRW17}. This step gives us a contracted
graph $G'$ of $G$ that preserves all non-trivial minimum cuts, as
will be proved in \Cref{lem:can contract} below. Previous algorithms
for computing such contraction are much more involved. For example,
they require an intricate analysis of PageRank \cite{KawarabayashiT19}
or local flow \cite{HenzingerRW17}. Moreover, both algorithms \cite{KawarabayashiT19,HenzingerRW17}
sequentially contract a part of $G$ into a supervertex and need to
distinguish supervertices and regular vertices thereafter. For us,
$G'$ is simply obtained by contracting each set $X''\in\X''$ simultaneously. 

Besides Step~\ref{step:main} of \Cref{alg:edgeconn} and the key lemma
below (\Cref{lem:can contract}), other algorithmic steps and analysis
follow the same template in \cite{KawarabayashiT19}. We only show
an alternative presentation for completeness.
\begin{lem}
\label{lem:can contract}Let $G = (V,E)$ be a simple graph. Let $(C,V\setminus C)$ be a non-trivial minimum cut in $G$. Let $X\in\expander(G,40/\delta)$, $X'=\trim(X)$, and $X''=\shave(X')$.
We have that
\begin{enumerate}
\item $\min\{|X\cap C|,|X\setminus C|\}\le\lambda/40$,
\item $\min\{|X'\cap C|,|X'\setminus C|\}\le2$, and
\item $\min\{|X''\cap C|,|X''\setminus C|\}=0$.
\end{enumerate}
In particular, $G'$ preserves all non-trivial minimum cuts of $G$.
\end{lem}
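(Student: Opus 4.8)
The plan is to establish the three bounds in sequence, each feeding into the next, exactly along the lines the names $\trim$ and $\shave$ suggest. Throughout, write $\lambda = |E(C,V\setminus C)|$ and recall that since $(C,V\setminus C)$ is a nontrivial min cut in a simple graph we have $\lambda \le \delta$ (a vertex of minimum degree gives a trivial cut of size $\delta$), and also $|C|, |V\setminus C| \ge 2$. Set $\phi = 40/\delta$.

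For part 1, I would apply the expander guarantee of \Cref{lem:decomp} \emph{inside} a single piece $X$ to the set $S = X\cap C$ (WLOG the side of smaller volume within $X$). The edges of $E(S, X\setminus S)$ all cross the global cut $(C,V\setminus C)$, so $|E(S,X\setminus S)| \le \lambda$. On the other hand the expander property gives $|E(S,X\setminus S)| \ge \phi\,\vol_G(S) \ge \phi\,|S|$, since every vertex has degree $\ge 1$; more usefully, $\vol_G(S) \ge \delta|S|$ only if all degrees were $\delta$, so instead I bound $\vol_G(S) \ge |S|\cdot\delta$ is \emph{false} in general — rather I should use $\vol_G(S)\ge |E(S,X\setminus S)|$ trivially, which is circular. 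The correct route: $|S|\le \vol_G(S)/\delta$ is wrong too. Let me reconsider — the clean inequality is $\phi\min\{\vol_G(S),\vol_G(X\setminus S)\} \le \lambda$, and since each vertex in $S$ has degree at least $\delta$... no, degree at least $1$. Here is the actual argument: a vertex $v\in C$ has $\deg(v) = |E(v,C)| + |E(v,V\setminus C)| $ and in a \emph{simple} graph $|E(v, V\setminus C)| \le |C^c|$... This is getting delicate; the honest statement is $\vol_G(S) \ge \delta |S|$ holds because $\delta$ is the \emph{minimum} degree, so every $v\in S$ has $\deg(v)\ge\delta$. Then $\phi\delta|S| \le \phi\vol_G(S) \le \lambda$, i.e. $|S| \le \lambda/(\phi\delta) = \lambda/40$. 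That is part 1.

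For part 2, I would show $\trim$ cannot remove too much: let $S = X'\cap C$ be the smaller side of $X'$. Each $v\in S$ survives $\trim$, so $|E(v,X')| \ge 2\deg(v)/5 \ge 2\delta/5$. The edges from $v$ that stay inside $C$ number at most $|C\cap X'| - 1 = |S|-1$ in a simple graph if $S$ is the $C$-side — wait, I should take $S$ to be whichever of $X'\cap C, X'\setminus C$ is smaller, so a vertex $v\in S$ has $|E(v, X'\cap S)| \le |S|-1$ by simplicity, hence $|E(v, X'\setminus S)| \ge 2\delta/5 - (|S|-1)$, and summing over $v\in S$ these all cross the global cut: $|S|\bigl(2\delta/5 - |S| + 1\bigr) \le \lambda \le \delta$. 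Combined with part 1, which after $\trim$ only shrinks $S$ so still gives $|S| \le \lambda/40 \le \delta/40$, plug in: $|S|(2\delta/5 - \delta/40 + 1) \le \delta$, so $|S| \cdot \Theta(\delta) \le \delta$, forcing $|S| \le 2$ once $\delta$ is large enough; the small-$\delta$ case is handled separately since then $\lambda \le \delta = O(1)$ and the whole problem is solved directly by Gabow (indeed the algorithm returns $\min\{\lambda',\delta\}$). For part 3, with $|S|\le 2$ after $\trim$, any $v\in S$ has at most one neighbor inside $S$ (simplicity), so $|E(v,X'\setminus v)| \ge |E(v,X')| - 1$... actually I want $|E(v,X')| > \deg(v)/2 + 1$ to fail, showing $\shave$ deletes every vertex of the smaller side. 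Since $v\in S$ has $|E(v, X'\cap C)| \le |S| \le 2$ while its other $X'$-neighbors lie across the cut, if $v$ survived $\shave$ we'd need $\deg(v)/2 + 1 < |E(v,X')| \le 2 + \lambda$... the cleaner bound: the cut edges at $v$ number $\ge |E(v,X')| - (|S|-1) \ge \deg(v)/2 + 1 - 1 = \deg(v)/2 \ge \delta/2$, and summing over the $\le 2$ vertices of $S$ gives $\lambda \ge \delta/2 \cdot |S|$; together with $\lambda\le\delta$ this only rules out $|S|=2$ immediately, and $|S|=1$ needs the "$+1$" slack in the definition of $\shave$ to derive a contradiction with $\lambda < \delta$ — and if $\lambda = \delta$ the min cut is computed as $\delta$ anyway. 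So $S = \emptyset$, giving part 3. The final sentence follows because if $X''\cap C$ or $X''\setminus C$ is empty for every piece, then no set $X''$ is split by the cut, so contracting each $X''$ leaves $(C,V\setminus C)$ intact as a cut of the same size in $G'$, and contraction never decreases $\lambda$.

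The main obstacle I anticipate is bookkeeping the chain of inequalities cleanly — in particular making sure the constant $40$ in $\phi = 40/\delta$ is exactly what propagates through $\trim$'s $2/5$ threshold and $\shave$'s $1/2$ threshold to kill the smaller side entirely, and handling the degenerate small-$\delta$ (equivalently $\lambda = \Theta(1)$, or $\lambda = \delta$) regime where the expander decomposition gives nothing but the direct Gabow call already returns the right answer. I would isolate that case at the very start of the proof.
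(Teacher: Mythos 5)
Parts (1) and (2) of your proposal are, after the false starts, essentially the paper's argument: the expander guarantee of \Cref{lem:decomp} together with $\vol_G(S)\ge\delta|S|$ gives (1), and your inequality $|S|\bigl(2\delta/5-|S|+1\bigr)\le\lambda\le\delta$ combined with $|S|\le\lambda/40\le\delta/40$ from (1) already forces $|S|\le2$ for \emph{every} $\delta$: if $|S|\ge3$ then, using $|S|\le\delta/40$, one gets $|S|(2\delta/5-|S|+1)\ge3(3\delta/8+1)>\delta$, a contradiction. So the hedge ``once $\delta$ is large enough, defer the small-$\delta$ case to Gabow'' is unnecessary, and it is also not a legitimate move inside this lemma: the lemma is a statement about the graph, and the algorithm's correctness in the regime $\lambda<\delta$ (\Cref{cor:correct}) is itself deduced \emph{from} this lemma, so appealing to the algorithm would be circular. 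Here nothing is lost only because the arithmetic happens to close unconditionally.

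Part (3), however, has a genuine gap. Your counting shows that any vertex $v$ of the smaller side surviving $\shave$ has $|E(v,X'\setminus C)|>\deg(v)/2\ge\delta/2$ edges crossing the cut; this rules out two surviving vertices (they would contribute more than $\delta\ge\lambda$ crossing edges), but it yields no contradiction when exactly one vertex survives: $\lambda>\delta/2$ is perfectly consistent with $\lambda\le\delta$ and with $\lambda<\delta$, with or without the ``$+1$'' slack, so the contradiction you assert for $|S|=1$ does not exist along this route, and ``if $\lambda=\delta$ the algorithm returns $\delta$ anyway'' is again a statement about the algorithm, not the lemma. What is missing is the one step where non-triviality of the cut is actually used, namely a local switching argument: if $v\in X''\cap C$ has $|E(v,X'\setminus C)|>\deg(v)/2$, then $|E(v,V\setminus C)|>\deg(v)/2$, so moving $v$ from $C$ to $V\setminus C$ changes the cut value by $\deg(v)-2|E(v,V\setminus C)|<0$; since $(C,V\setminus C)$ is non-trivial, $C\setminus\{v\}\neq\emptyset$, so this is a strictly smaller cut, contradicting minimality. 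This is exactly how the paper concludes (3), and without it your proof of (3) --- and hence of the final contraction statement --- is incomplete.
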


\begin{proof}
(1): We have $\min\{|X\cap C|,|X\setminus C|\}\le\lambda/40$ because of
the following:
\begin{align*}
\lambda & \ge|E(X\cap C,X\setminus C)| & \text{as }C\text{ is a minimum cut}\\
 & \ge (40/\delta)\cdot\min\{\vol_{G}(X\cap C),\vol_{G}(X\setminus C)\} & \text{by \Cref{lem:decomp}}\\
 & \ge 40\cdot\min\{|X\cap C|,|X\setminus C|\}.
\end{align*}

(2): Assume w.l.o.g.~that $|X'\cap C|\le|X'\setminus C|$. So, $|X' \cap C|  \le  \min\{|X \cap C|,|X \setminus C| \} \le \lambda/40$ by (1). Observe that 
\begin{align*}
\delta\ge\lambda & \ge|E(X'\cap C,X'\setminus C)| & \text{as }C\text{ is a minimum cut}\\
 & =\vol_{G[X']}(X'\cap C)-2|E(X'\cap C,X'\cap C)|\\
 & \ge\frac{2}{5}\delta|X'\cap C|-2|X'\cap C|^{2} & \text{as }X'=\trim(X)\text{ and }G\text{ is simple}.
\end{align*}
From the above, we conclude $|X'\cap C|\le2$. Otherwise, $|X'\cap C|\ge3$ and so $\delta \ge (6/5)\delta - 6|X' \cap C|$, which implies that $|X' \cap C| \ge \delta /30$. But we have $|X'\cap C|\le\lambda/40<\delta/30$, which is a contradiction.

(3): Again, assume w.l.o.g.~that $|X'\cap C|\le|X'\setminus C|$.
Suppose for contradiction that $\min\{|X''\cap C|,|X''\setminus C|\}>0$.
So there is a vertex $v\in X''\cap C\subseteq X'\cap C$. As $G$
is simple and $|X'\cap C|\le2$ by (2), we have $|E(v,X'\cap C)|\le1$.
Also, we have $|E(v,X')|>\deg(v)/2+1$ because $X''=\shave(X')$.
Therefore, $|E(v,X'\setminus C)|=|E(v,X')|-|E(v,X'\cap C)|>\deg(v)/2+1-1=\deg(v)/2.$
As $(C,V\setminus C)$ is non-trivial, we can switch $v$ from $C$
to $V\setminus C$ and obtain a smaller cut, contradicting the fact
that $C$ is a minimum cut.
\end{proof}
\begin{cor}
\label{cor:correct}\Cref{alg:edgeconn} correctly computes the edge
connectivity $\lambda$ of $G$.
\end{cor}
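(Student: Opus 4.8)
The plan is to verify that the value $\min\{\lambda',\delta\}$ returned in Step~\ref{step:Gabow contracted} of \Cref{alg:edgeconn} equals $\lambda$, using only \Cref{lem:can contract} together with the elementary fact that contracting vertex sets cannot decrease edge connectivity.

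First I would dispatch the inequality $\min\{\lambda',\delta\}\ge\lambda$. On one hand, the cut isolating a minimum-degree vertex shows $\lambda\le\delta$. On the other hand, since $G'$ is obtained from $G$ by contracting the sets in $\X''$, every vertex of $G'$ is either an original vertex of $G$ or a whole set $X''\in\X''$; consequently every cut $(S',V(G')\setminus S')$ of $G'$ pulls back to a cut $(S,V\setminus S)$ of $G$ that splits no $X''$ and has the same number of crossing edges. Hence every cut value of $G'$ occurs as a cut value of $G$, so $\lambda'\ge\lambda$ (adopting the convention $\lambda'=+\infty$ when $G'$ has a single vertex). Combining, $\min\{\lambda',\delta\}\ge\lambda$.

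For the reverse inequality $\min\{\lambda',\delta\}\le\lambda$ I would split into two cases. If $\lambda=\delta$, then already $\min\{\lambda',\delta\}\le\delta=\lambda$. If $\lambda<\delta$, then no minimum cut of $G$ can be trivial, because a trivial cut $(\{v\},V\setminus\{v\})$ has $\deg(v)\ge\delta>\lambda$ crossing edges; so fix an arbitrary minimum cut $(C,V\setminus C)$, which is non-trivial. By part~(3) of \Cref{lem:can contract} (equivalently, its final assertion), $\min\{|X''\cap C|,|X''\setminus C|\}=0$ for every $X''\in\X''$, i.e.\ no contracted set straddles this cut; therefore $(C,V\setminus C)$ descends to a genuine cut of $G'$ with exactly $\lambda=|E(C,V\setminus C)|$ crossing edges, which forces $\lambda'\le\lambda$ and hence $\min\{\lambda',\delta\}\le\lambda$. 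Putting the two directions together gives $\min\{\lambda',\delta\}=\lambda$, and by \Cref{lem:Gabow} (run on $G'$ with threshold $k=\delta$) this quantity is computed correctly, so the algorithm is correct.

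I do not expect a genuine obstacle: all of the substance sits in \Cref{lem:can contract}, which is already proved. The only points needing a little care are (i) the degenerate situation in which $G'$ collapses to a single vertex, but that can only arise when $\X''=\{V\}$, which by \Cref{lem:can contract}(3) rules out every non-trivial minimum cut and thus forces $\lambda=\delta$, exactly the first case above; and (ii) spelling out precisely that a cut of $G$ persists in $G'$ if and only if it refines the contraction into the sets of $\X''$, which is what makes both the pull-back and the push-forward steps legitimate.
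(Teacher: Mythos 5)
Your proposal is correct and follows essentially the same route as the paper's proof: $\lambda'\ge\lambda$ because $G'$ arises by contraction, $\lambda\le\delta$ always, and the case split on $\lambda=\delta$ versus $\lambda<\delta$ with \Cref{lem:can contract} giving $\lambda'=\lambda$ in the latter case. The extra details you supply (the pull-back/push-forward of cuts through the contraction, the single-vertex degenerate case, and invoking \Cref{lem:Gabow} with $k=\delta$) are all sound and simply make explicit what the paper leaves implicit.
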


\begin{proof}
Note that $\lambda'\ge\lambda$ because $G'$ is obtained from $G$ by
contraction. If $\lambda=\delta$ (i.e.~there is a trivial minimum
cut), then $\min\{\lambda',\delta\}=\lambda$. If $\lambda<\delta$
(i.e.~all minimum cuts are non-trivial), then we have $\lambda'=\lambda$
by \Cref{lem:can contract} and so $\min\{\lambda',\delta\}=\lambda$.
\end{proof}
\begin{lem}
\label{lem:bound G'}The contracted graph $G'$ has at most $O(m\gamma/\delta)$
edges.
\end{lem}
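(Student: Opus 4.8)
The plan is to bound the edges of $G'$ that survive the contraction, i.e.\ the edges of $G$ whose two endpoints do not lie in a common member of $\X''$. Let $R=V\setminus\bigcup_{X''\in\X''}X''$ be the set of all vertices discarded while computing the $\trim$ and $\shave$ operations, and write $R=R_{\trim}\sqcup R_{\shave}$ according to whether a vertex was dropped during some $\trim(X)$ or during the subsequent $\shave(X')$. Since each $X''=\shave(\trim(X_i))$ is contained in $X_i$, a surviving edge either crosses the partition $\X$ (and there are only $O(\phi m\gamma)=O(m\gamma/\delta)$ such edges by \Cref{lem:decomp} with $\phi=40/\delta$) or has an endpoint in $R$. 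Hence $|E(G')|\le O(m\gamma/\delta)+\vol_G(R)$, and it remains to show $\vol_G(R)=O(\phi m\gamma)$, which I would do by bounding $\vol_G(R_{\trim})$ and $\vol_G(R_{\shave})$ in turn.

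For $R_{\trim}$ I would argue one part $X=X_i$ at a time with a charging argument. List the vertices $\trim$ deletes from $X$ as $v_1,v_2,\dots$ in deletion order; when $v_\ell$ is deleted the remaining set is $S_\ell=X\setminus\{v_1,\dots,v_{\ell-1}\}$ with $|E(v_\ell,S_\ell)|<2\deg(v_\ell)/5$, so $v_\ell$ sends more than $3\deg(v_\ell)/5$ edges to $V\setminus S_\ell=(V\setminus X)\cup\{v_1,\dots,v_{\ell-1}\}$. Summing over $\ell$, the edges from the $v_\ell$'s to $V\setminus X$ number at most $|E(X,V\setminus X)|$, while the edges from $v_\ell$ to earlier-deleted vertices number exactly the number of edges internal to $X\setminus\trim(X)$, which is at most $\tfrac12\vol_G(X\setminus\trim(X))$. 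Rearranging gives $\vol_G(X\setminus\trim(X))<10\,|E(X,V\setminus X)|$, and summing over all parts yields $\vol_G(R_{\trim})<10\sum_i|E(X_i,V\setminus X_i)|=O(\phi m\gamma)$.

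For $R_{\shave}$, each $v\in R_{\shave}$ lies in a unique trimmed part $X'=\trim(X_i)$ and, by definition of $\shave$, satisfies $|E(v,X')|\le\deg(v)/2+1$, so it sends at least $\deg(v)/2-1$ edges to $V\setminus X'$. The crucial observation is that each such edge is either an edge crossing $\X$ or an edge whose other endpoint lies in $X_i\setminus\trim(X_i)\subseteq R_{\trim}$ — in particular it is never an edge between two vertices of $R_{\shave}$ inside the same part, so the estimate does not feed back on $R_{\shave}$. Each $\X$-crossing edge is charged at most twice this way, and the edges hitting $R_{\trim}$ number at most $\vol_G(R_{\trim})$, so summing over $v\in R_{\shave}$ gives $\tfrac12\vol_G(R_{\shave})-|R_{\shave}|\le O(\phi m\gamma)$. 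Since $|R_{\shave}|\le\vol_G(R_{\shave})/\delta$, this becomes $(\tfrac12-\tfrac1\delta)\vol_G(R_{\shave})=O(\phi m\gamma)$, yielding $\vol_G(R_{\shave})=O(\phi m\gamma)$ once $\delta\ge 4$; when $\delta$ is bounded by a constant the whole lemma reduces to $|E(G')|=O(m\gamma)$, which is immediate from $|E(G')|\le m$. Combining the three estimates, $|E(G')|=O(\phi m\gamma)=O(m\gamma/\delta)$.

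The step I expect to be the main obstacle is the $R_{\shave}$ bound: the additive slack of $+1$ in the $\shave$ rule forces the small case analysis on $\delta$, and one must check carefully that the edges leaving a shaved vertex's trimmed part are all absorbed by quantities ($\X$-crossing edges and $\vol_G(R_{\trim})$) that were bounded independently beforehand, so that the argument stays well-founded rather than circular.
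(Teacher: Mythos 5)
Your proof is correct and follows essentially the same route as the paper: both split the surviving edges of $G'$ into $\X$-crossing edges, trim losses, and shave losses, charge each against the $O(\phi m\gamma)$ crossing bound of \Cref{lem:decomp} with $\phi=40/\delta$, and dismiss $\delta\le 3$ as trivial. The only difference is bookkeeping — you bound the volume of removed vertices and charge the shave losses against $\vol_G(R_{\trim})$, while the paper bounds the newly crossing edges directly and charges the shave losses via $|E(X',V\setminus X')|\le|E(X,V\setminus X)|$ — but both are the same charging argument.
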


\begin{proof}
Assume that $\delta\ge4$ otherwise the statement is trivial. Let
$G/\X$ denote the graph obtained from $G$ by contracting each $X\in\X$
into a single vertex. Let $G/\X'$ and $G/\X''$ be similarly defined.
Note that $G'=G/\X''$. We would like to bound $|E(G/\X'')|=|E(G/\X)|+|E(G/\X')\setminus E(G/\X)|+|E(G/\X'')\setminus E(G/\X')|$.
We will show that each term is bounded by $O(m\gamma/\delta)$ where $\gamma$ is the factor from \Cref{lem:decomp}.

First, the set $E(G/\X)$ contains exactly the edges crossing the
partition $\X$ of $V$. So $|E(G/\X)|=\frac{1}{2}\sum_{X\in\X}|E(X,V\setminus X)|=O(m\gamma/\delta)$
by \Cref{lem:decomp}.

Second, the set $E(G/\X')\setminus E(G/\X)$ contains all edges that
are ``trimmed from'' each $X\in\X$. Consider the $\trim$ procedure
executing on $X$ until $X$ becomes $X'$. Whenever a vertex $v$
is removed from $X$, $|E(X,V\setminus X)|$ is decreased by at least
$\deg(v)/5$, because at that point of time $|E(v,X)|\le2\deg(v)/5$
but $|E(v,V\setminus X)|\ge3\deg(v)/5$. On the other hand, the number
of trimmed edges, $E(G/\X')\setminus E(G/\X)$, is increased by at
most $|E(v,X)|\le2\deg(v)/5$. Initially, we have $\sum_{X\in\X}|E(X,V\setminus X)|=2|E(G/\X)|$.
As we argued, every two units in $|E(G/\X')\setminus E(G/\X)|$ can
be charged to one unit in $\sum_{X\in\X}|E(X,V\setminus X)|$. So
$|E(G/\X')\setminus E(G/\X)|\le4|E(G/\X)|=O(m\gamma/\delta)$.

Last, the set $E(G/\X'')\setminus E(G/\X')$ contains all edges that
are ``shaved from'' each $X'\in\X'$. The number of shaved edges
from $X'$ is bounded by $\sum_{v\in X'\setminus\shave(X')}|E(v,X')|$.
By definition of $\shave$, for each vertex $v\in X'\setminus\shave(X')$,
we have $|E(v,X')|<\deg(v)/2+1$ and so $|E(v,V\setminus X')|>\deg(v)/2-1$.
As $\delta\ge4$, we have $|E(v,X')|<4|E(v,V\setminus X')|$ and so
$\sum_{v\in X'\setminus\shave(X')}|E(v,X')|\le4|E(X',V\setminus X')|$.
Summing over all $X'\in\X'$, we have $|E(G/\X'')\setminus E(G/\X')|\le4\sum_{X'\in\X'}|E(X',V\setminus X')|\le4\sum_{X\in\X}|E(X,V\setminus X)|=O(m\gamma/\delta)$.
The last inequality is because the $\trim$ procedure only decreases
$|E(X,V\setminus X)|$ and so $|E(X',V\setminus X')|\le|E(X,V\setminus X)|$
for each $X'=\trim(X)$. 
\end{proof}
\begin{cor}
\label{cor:time}\Cref{alg:edgeconn} takes $O(m\gamma)=m^{1+o(1)}$
time.
\end{cor}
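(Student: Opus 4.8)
The plan is to bound the running time of each of the three steps of \Cref{alg:edgeconn} separately and observe that each is $O(m\gamma)$, where $\gamma=m^{o(1)}$ is the overhead of \Cref{lem:decomp}; summing the three bounds then gives the claim. For Step~\ref{step:main}, first note that $\delta$ is computed in $O(m)$ time, and that $\expander(G,40/\delta)$ runs in $O(m\gamma)$ time directly by \Cref{lem:decomp}. It then remains to argue that the $\trim$ and $\shave$ passes are cheap. Since the sets in $\X$ partition $V$, the total number of edges inside the induced subgraphs $G[X]$ over all $X\in\X$ is at most $m$, so it suffices to run a single $\trim$ (resp.\ $\shave$) call on $X$ in time $O(|E(G[X])|+|X|)$. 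For $\trim$ I would, for each $v\in X$, maintain the current value $|E(v,X)|$ together with a worklist of vertices violating $|E(v,X)|<2\deg(v)/5$; whenever such a $v$ is removed, decrement $|E(u,X)|$ for each neighbour $u$ of $v$ still inside $X$ and push $u$ onto the worklist if it newly becomes violating. Each vertex is removed at most once and each edge is processed $O(1)$ times; crucially the degrees $\deg(v)$ in $G$ are fixed, so no global recomputation is needed. For $\shave$, a single scan over $X$ testing $|E(v,X)|>\deg(v)/2+1$ suffices. Hence Step~\ref{step:main} takes $O(m\gamma)$ in total.

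For Step~\ref{step:contract}, building $G'$ from $G$ by contracting each $X''\in\X''$ is a standard relabelling of edge endpoints (each vertex gets the identifier of its part), which is done in $O(m)$ time. For Step~\ref{step:Gabow contracted}, we invoke Gabow's algorithm (\Cref{lem:Gabow}) on $G'$ with parameter $k=\delta$; by \Cref{lem:Gabow} this takes $O(|E(G')|\cdot\min\{\lambda',\delta\})\le O(|E(G')|\cdot\delta)$ time. Plugging in the bound $|E(G')|=O(m\gamma/\delta)$ from \Cref{lem:bound G'} gives $O\big((m\gamma/\delta)\cdot\delta\big)=O(m\gamma)$. (Computing $\min\{\lambda',\delta\}$ from the output and returning it is $O(1)$ extra.)

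Adding the three bounds yields an overall running time of $O(m\gamma)=m^{1+o(1)}$, since $\gamma=m^{o(1)}$. The only step that requires any care is the linear-time implementation of $\trim$; I do not expect this to be a genuine obstacle, as the incremental worklist described above handles it cleanly, and it is precisely because $\trim$ repeatedly removes violating vertices (rather than, say, iterating passes) that a naive analysis could look superlinear. Everything else is either a black-box invocation (\Cref{lem:decomp}, \Cref{lem:Gabow}) or a trivial linear-time operation combined with the edge bound of \Cref{lem:bound G'}.
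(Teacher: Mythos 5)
Your proposal is correct and follows the same route as the paper: compute $\X$ in $O(m\gamma)$ time via \Cref{lem:decomp}, implement $\trim$ and $\shave$ and the contraction in linear time, and bound Gabow's call by $O(|E(G')|\cdot\delta)=O(m\gamma)$ using \Cref{lem:bound G'}. The only difference is that you spell out the worklist implementation of $\trim$, which the paper leaves as a ``straightforward implementation.''
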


\begin{proof}
In Step~\ref{step:main}, $\X$ can be computed in $O(m\gamma)$ time by \Cref{lem:decomp}.
$\X'$ and $\X''$ can be computed in $O(m)$ by using straightforward
implementations for $\trim$ and $\shave$. Contracting $G$ into $G'$ can be done in $O(m)$ time in Step~\ref{step:contract}. Finally, in Step~\ref{step:Gabow contracted},
the minimum degree $\delta$ can be computed in $O(m)$ time, and Gabow's algorithm takes $O(|E(G')|\delta)=O(m\gamma)$ time by \Cref{lem:bound G'}. 
\end{proof}
To conclude, \Cref{thm:main} follows immediately from Corollaries \ref{cor:correct}
and \ref{cor:time}.

\section*{Acknowledgements}
I thank Aaron Bernstein for encouragement for writing this note up. 
Also, thanks to Sayan Bhattacharya, Maximilian Probst Gutenberg, Jason Li, Danupon Nanongkai, and Di Wang for helpful comments on the write-up.

\bibliographystyle{alpha}

\bibliography{references}

\appendix

\section{Variants of Expander Decomposition}

\label{sec:strong decomp}

The guarantee for expander decomposition is usually stated in a weaker
form: for every $\emptyset\neq S\subset X_{i}$ and $i$, we have
$|E(S,X_{i}\setminus S)|\ge\phi\min\{\vol_{G[X_{i}]}(S),\vol_{G[X_{i}]}(X_{i}\setminus S)\}$
instead of $|E(S,X_{i}\setminus S)|\ge\phi\min\{\vol_{G}(S),\vol_{G}(X_{i}\setminus S)\}$
as in \Cref{lem:decomp}. In \cite{ChuzhoyGLNPS20}, they also stated
the guarantee in this weaker form.

Here, we argue that the stronger form can be assumed without loss of
generality. This observation already appeared in \cite{ChuzhoyS20}.
Let $G=(V,E)$ be any $m$-edge graph and let $G'$ be
obtained from $G$ by adding $\deg_{G}(v)$ self-loops to each vertex
$v$. So $G'$ has $m'=O(m)$ edges. Suppose we have obtained a weaker
form of expander decomposition $\X=\{X_{1},\dots,X_{k}\}$ of $G'$.
That is, $\sum_{i}|E_{G'}(X_{i},V\setminus X_{i})|=O(\phi m'\gamma)$
and $|E_{G'}(S,X_{i}\setminus S)|\ge\phi\min\{\vol_{G'[X_{i}]}(S),\vol_{G'[X_{i}]}(X_{i}\setminus S)\}$
for every $\emptyset\neq S\subset X_{i}$ and $i$. 

Observe that $E_{G}(A,B)=E_{G'}(A,B)$ for any two disjoint sets $A,B\subseteq V$.
So $\sum_{i}|E_{G}(X_{i},V\setminus X_{i})|=O(\phi m'\gamma)=O(\phi m\gamma)$.
Also, we have 
\[
|E_{G}(S,X_{i}\setminus S)|=|E_{G'}(S,X_{i}\setminus S)|\ge\phi\min\{\vol_{G'[X_{i}]}(S),\vol_{G'[X_{i}]}(X_{i}\setminus S)\}\ge\phi\min\{\vol_{G}(S),\vol_{G}(X_{i}\setminus S)\}
\]
where the last inequality is because of the self-loops in $G'$. That
is, $\X$ is indeed a stronger form of expander decomposition of $G$
(modulo losing a constant factor in the bound of $\sum_{i}|E_{G'}(X_{i},V\setminus X_{i})|$). 
\end{document}